\title{Analyzing Time-Varying Scalar Fields using \\Piecewise-Linear Morse-Cerf Theory}
\author{Amritendu Dhar\thanks{Indian Institute of Science, Bangalore. Email: amritendud@iisc.ac.in}\\ %
\and Apratim Chakraborty\thanks{TCG CREST, Kolkata, India. Email: apratimn@gmail.com}\\ %
\and Vijay Natarajan\thanks{Indian Institute of Science, Bangalore. Email: vijayn@iisc.ac.in}\\ %
}
\newcommand{\revision}[1]{#1\xspace}
\abstract{
Morse-Cerf theory considers a one-parameter family of smooth functions defined on a manifold and studies the evolution of their critical points with the parameter. This paper presents an adaptation of Morse-Cerf theory to a family of piecewise-linear (PL) functions. The vertex diagram and Cerf diagram are introduced as representations of the evolution of critical points of the PL function. The characterization of a crossing in the vertex diagram based on the homology of the lower links of vertices leads to the definition of a topological descriptor for time-varying scalar fields. An algorithm for computing the Cerf diagram and a measure for comparing two Cerf diagrams are also described together with experimental results on time-varying scalar fields. 
} 
\keywords{}
\pgfplotsset{compat=1.18}
\newtheorem{theorem}{Theorem}[section]
\theoremstyle{definition}  
\newtheorem{definition} [theorem] {Definition}
\DeclareMathOperator{\St}{st}
\DeclareMathOperator{\Lk}{lk}
\newcommand{\Real}{\mathbb{R}}
\newcommand{\Mspace}{\mathbb{M}}
\newcommand{\Xspace}{\mathbb{X}}
\newcommand{\Cerfgraphic}{\mathcal{C}}
\newcommand{\ecc}{\mathcal{E}}
\newcommand{\TVECC}[1]{\mathcal{E}_{#1}}
\newcommand{\orecc}{\operatorname{ECC}}
\newcommand{\localTVECC}[1]{\chi_{#1}}
\newcommand{\inlineheading}[1]{%
  \vspace{1ex}%
  \noindent\textbf{#1}%
}
\newcommand{\cancel}[1]{}
\begin{document}


\maketitle

\section{Introduction}

The study of time-varying scalar fields is pivotal in understanding complex dynamical systems. Topological analysis of a scalar field typically proceeds by viewing the field as a Morse function and constructs Morse-theoretic descriptors such as the Morse–Smale complex or Reeb graph~\cite{heine2016survey,yan2021scalar}. A sequence of elementary moves, including the cancellation or creation of a pair of critical points, can be used to transform one Morse function into another. These elementary moves applied on a topological descriptor enable the development of algorithms for simplification, comparison, and feature identification. Cerf theory is a natural extension of this approach to one-parameter family of smooth functions~\cite{cerf1970stratification, Bubenik2024}. A key construct in this theory is the Cerf graphic of a generic one-parameter family of smooth functions, which tracks the critical values as the parameter varies. This theory however, is developed in the context of smooth functions and a PL analog is not yet available. Such a transportation of ideas to a family of PL functions is necessary in order to apply these ideas towards the analysis of time-varying scalar fields or other one-parameter families such as ensemble data.  


In the context of time-varying scalar fields, previous work has described methods for computing the evolution of topological descriptors with a focus on characterizing changes to the combinatorial representation, including time-varying Reeb graph~\cite{edelsbrunner2008timevaryingreeb}, merge tree~\cite{oestering2017computing}, nested tracking graph~\cite{Garth2020NestedTrackingGraph}, and time-varying extremum graph~\cite{das2024tveg}. Comparison measures between topological descriptors~\cite{sridharamurthy2020,Soler2018,yan2021scalar, Tierny2021, BeiWang2025, Yan2023} have been used to track critical points over time.
Other approaches study the evolution of persistence diagrams, including vineyards~\cite{cohen2006vines} and multi-parameter persistence modules arising from a one-parameter family~\cite{Bubenik2024}.
While these descriptors support the tracking of critical points, no known descriptor provides a comprehensive representation of all topological events for PL functions together with a supporting theoretical foundation. 

\inlineheading{Contributions.}
In this paper, we initiate the study of time-varying scalar fields on a combinatorial manifold using a piecewise-linear (PL) adaptation of Cerf theory. Our approach involves the development of two key constructs, vertex diagram and Cerf diagram, which enable the study of the temporal behavior of critical points. We study degeneracies in the scalar field, represented as crossings in the vertex diagram, and analyze changes in the time-varying Betti number $\beta_{i}^{t}(v)$ associated with a critical point $v$. We also develop a simple algorithm for efficient computation of the Cerf diagram. 


Next, we introduce the notion of a time-varying Euler Characteristic Curve (TV-ECC) for a time-varying scalar field. TV-ECC is a topological descriptor that is defined as an aggregate of its local variant (local TV-ECC), defined at each vertex. This descriptor is used to define a comparison measure between two Cerf diagrams. Finally, we present experimental results on two datasets to show the potential utility towards the analysis of time-varying scalar fields.



     
    

\section{Background}
We introduce the necessary terms and concepts from Morse theory for PL functions and Cerf theory for smooth functions~\cite{edelsbrunner2010computational, dey2022computational}. We refer the reader to classic texts and surveys on smooth Morse theory~\cite{milnor1963morse} and Cerf theory~\cite{gay2012connected,hatcher1973pseudoisotopies} for a detailed exposition.

\subsection{PL functions on combinatorial manifolds}
A simplicial complex \(K\) is called a \emph{triangulation} of a manifold \(\Mspace\) if the underlying space of \(K\) is homeomorphic to \(\Mspace\). A PL function on \(\Mspace\) is defined by assigning real values to vertices of \(K\) and extending the map linearly in the interior of each simplex. A triangulated manifold of dimension $d$ is called a \emph{combinatorial manifold} if the link of each vertex is homeomorphic to a sphere of dimension $d-1$. 

Throughout this paper, we work in the category of combinatorial manifolds and analyze topological properties of one‑parameter families of PL functions on them. First, we recall some basic notions associated with a PL function $f:\Mspace \to \Real$~\cite{edelsbrunner2010computational}. 
The \emph{star} of a vertex \(u \in K\) is defined as \(\St(u)=\{\sigma\in K : u\in\sigma\}\). 
The \emph{link} of a vertex \(u\) is defined as \(\Lk(u)=\{\tau\in K : \exists\,\sigma\in\St(u),\,\tau\subset\sigma,\,u\notin\tau\}\). 
The \emph{lower link} of a vertex \(u\) is a subset of the link and is defined as \(\Lk^-(u)=\{\tau\in\Lk(u) : \forall\,v\in\tau,\,f(v)\le f(u)\}\).\\

The PL function $f$ is called \emph{generic PL} if $f(u) \neq f(v)$ for two distinct vertices $u, v$ of $K$. Vertices of $K$ may be classified as regular or critical with respect to a generic PL function $f$ based on the homology of the lower link. They are also referred to as ``homologically regular'' or ``H-regular'' and ``homologically critical'' or ``H-critical''~\cite[Definition 3.3]{Grun-Kuhn-PLMorse}.
\begin{definition}[Regular point]\label{defn:regularpt}
A vertex \(v\) is called a \emph{regular point} of \(f\) if for every \(i \in [0,d]\), $\widetilde H_{i-1}\bigl(\Lk^-(v);\Real\bigr)\;=\;0$.
\end{definition}
\begin{definition}[Critical point]\label{defn:criticalpt}
A vertex \(v\)  is \emph{critical} for \(f\) if the homological index
\( \beta(v) =  (\beta_0(v),\beta_1(v),\dots,\beta_d(v)), \) where \(\beta_i(v) \;=\;\dim_{\Real}\widetilde H_{i-1}\bigl(\Lk^-(v);\Real\bigr) \), is not identically zero. 
\end{definition}
In the above, we adopt the convention that the reduced homology group $\widetilde H_{-1}(\Xspace;\Real)= 0$ if $\Xspace\neq\emptyset$, and $\widetilde H_{-1}(\Xspace;\Real)= \Real$ if $\Xspace=\emptyset$.
A critical point \(v\) is called \emph{non‑degenerate} if exactly one entry $\beta_i(v)$ of its index $\beta(v)$ equals 1 and all others are 0. In this case, $v$ is called a critical point of index $i$. If exactly one entry $\beta_i$ is non-zero then the critical point of index $i$ is said to have multiplicity  $\beta_i$. 
\begin{definition}[PL Morse function]\label{defn:plMorse}
A generic PL function, whose critical points are all non-degenerate, is called a \emph{PL Morse} function.  
\end{definition}

\subsection{Cerf theory for smooth functions}\label{sec:Cerfgraphic}
A generic one-parameter family of smooth functions $f_t: \Mspace \rightarrow \Real$, where $t$ ranges over an interval $I$, is Morse at all but finitely many time instances. At a degenerate time instance $t_0$ two critical points of $f_t$ may have identical values. As time $t$ increases and crosses $t_0$, the corresponding pair of critical values may cross each other. Alternatively, we observe a birth/death transition of critical points. 
The function $f_t$ can be expressed as follows in terms of local coordinates \revision{$x = (x_1,\ldots,x_d)$} of $\Mspace$ within the neighborhood of a degenerate birth/death critical point~\cite{cerf1970stratification}.
{
\setlength{\abovedisplayskip}{0pt}
\setlength{\belowdisplayskip}{0pt}
\begin{equation}
    \revision{f_t(x) = x_{k+1}^3 \pm tx_{k+1}} - \sum_{i=1}^k x_i^2 + \sum_{i=k+2}^d x_i^2.
\end{equation}
}
The cubic term in $x_{k+1}$ unfolds the degenerate quadratic so that at $t=0$ a canceling pair of index-$k$ and index-$(k+1)$ critical points is created or annihilated, as determined by the $\pm$ sign before $x_{k+1}$.
Suppose that each \(f_t\) takes values in a compact interval \(J \subset \Real\) for $t \in I$. Cerf studied the stratification of the space of smooth functions on \(\Mspace\times I\)~\cite{cerf1970stratification}.  
\begin{definition}[Cerf graphic]\label{defn:smoothcerfgraphic}
    The \emph{Cerf graphic} of a smooth family of functions \(\{f_t\}\) is the subset
\[
  \Cerfgraphic \;=\; \bigl\{(t,c)\in I\times J \mid c\text{ is a critical value of }f_t\bigr\}
  \;\subset\; I\times J.
\]
\end{definition}
%

\section{PL Morse-Cerf theory}
PL Morse theory analyzes a combinatorial manifold’s topology using PL Morse functions. We introduce an extension of Cerf theory to the PL category aimed at developing topological descriptors of PL time-varying scalar fields and methods for analyzing such fields.
\begin{figure}[ht]
\centering
\begin{tikzpicture}[scale=0.7]
	
		\node  (0) at (-18, 7) {};
		\node  (1) at (-18, 6) {};
		\node  (2) at (-18, 5) {};
		\node  (7) at (-17, 7) {};
		\node  (8) at (-17, 6) {};
		\node  (9) at (-17, 5) {};
		\node  (14) at (-16, 7) {};
		\node  (15) at (-16, 6) {};
		\node  (16) at (-16, 5) {};
		\node  (21) at (-15, 7) {};
		\node  (22) at (-15, 6) {};
		\node  (23) at (-15, 5) {};
		\node  (28) at (-14, 7) {};
		\node  (29) at (-14, 6) {};
		\node  (30) at (-14, 5) {};
		\node  (35) at (-13, 7) {};
	    \node  (36) at (-13, 6) {};
        \node  (91) at (-12.5, 6.5) {};
		\node  (37) at (-13, 5) {};
		\node  (42) at (-12, 7) {};
		\node  (44) at (-12, 5) {};
		\node  (49) at (-11, 7) {};
		\node  (50) at (-11, 6) {};
		\node  (51) at (-11, 5) {};
		\node  (56) at (-10, 7) {};
		\node  (57) at (-10, 6) {};
		\node  (58) at (-10, 5) {};
		\node  (63) at (-9, 7) {};
		\node  (64) at (-9, 6) {};
		\node  (65) at (-9, 5) {};
		\node  (70) at (-19, 7) {Vertex $v_3$};
		\node  (71) at (-19, 6) {Vertex $v_2$};
		\node  (72) at (-19, 5) {Vertex $v_1$};
		\node  (80) at (-18, 4) {$0$};
		\node  (81) at (-12, 6) {};
    
		\node  (82) at (-17, 4) {$0.2$};
		\node  (83) at (-16, 4) {$0.3$};
		\node  (85) at (-13, 4) {$0.6$};
		\node  (87) at (-12, 4) {$0.7$};
	
		\node  (89) at (-9, 4) {$1$};

    	\node  (90) at (-19, 4) {$t$};
	
		\draw[dotted] (50.center) to (64.center);
		\draw[dotted] (36.center) to (91.center);
		\draw (42.center) to (63.center);
		\draw[dotted]  (91.center) to (81.center);
		\draw[dotted]  (81.center) to (50.center);
		\draw (2.center) to (14.center);
		\draw[dotted] (1.center) to (7.center);
		\draw (0.center) to (16.center);
		\draw (16.center) to (65.center);
		\draw (14.center) to (35.center);
            \draw (35.center) to (91.center);
            \draw (91.center) to (42.center);
		\draw[dotted] (7.center) to (15.center);
		\draw[dotted] (15.center) to (36.center);
\end{tikzpicture}
\caption{Vertex diagram of a PL time varying function represents the evolution of regular (dotted) and critical points (solid).} 
\label{fig:vertexdiag}
\end{figure}
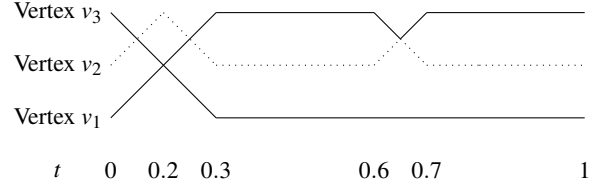

\inlineheading{Diagrams.}
Let $\Mspace$ be a $d$-dimensional combinatorial manifold and $V$ be the vertex set of its triangulation $K$. We will consider the  time interval $I = [0,1]$. A one-parameter PL family $\left\{f_t\right\}$ is called generic if the collection of \revision{vertex curves} $\left\{\left(t, f_t\left(v\right)\right) \mid t \in[0,1]\right\}$ for all $v\in V$ contains finitely many \revision{degree-2 intersection points} at distinct time steps $0<T_1<T_2<\cdots<T_n<1$ and contains no higher degree intersection point.    

\begin{definition}[Vertex diagram]
Given a generic one-parameter PL family $\left\{f_t\right\}$, the set $\left\{\left(t, f_t\left(v\right)\right) \mid t \in[0,1], v \in V\right\}$ is called the \emph{vertex diagram} of $\left\{f_t\right\}$.
\end{definition}
\Cref{fig:vertexdiag} shows a vertex diagram, consisting of three vertices, that distinguishes between critical and regular points. Next, we introduce the notion of a PL Morse family. 
\begin{definition}[One-parameter PL Morse family]
A one-parameter PL family $\left\{f_t\right\}$ is called a \emph{PL Morse family} if in addition to being generic, $f_t$ is PL Morse for $t \neq T_1, \ldots, T_n$.
\end{definition}
Note that being PL Morse is not a generic property of a one‑parameter PL family. However, in practice any such family can be approximated by a \revision{PL Morse family~\cite{simulationsimplicity}}. Hence, theoretical results derived for PL Morse families are useful for analyzing time-varying scalar fields.
\begin{definition}[Cerf diagram]
Let \(C_t\subset V\) be the set of critical points of \(f_t\). The \emph{Cerf diagram} of the family \(\{f_t\}\) is given by the set $\left\{(t,f_t(v)) \mid \ t\in[0,1], v\in C_t\right\}$.
\end{definition}
\Cref{fig:cerf_2Gaussians} shows the Cerf diagram of a synthetic sum-of-Gaussians dataset. We denote by \(\Lk_t^-(v)\) the lower link of the vertex \(v\) with respect to \(f_t\), and define time-varying Betti numbers as $ \beta_i^t(v)\;:=\;\dim_{\Real}H_{i-1}\bigl(\Lk_t^-(v);\Real\bigr)$.

\begin{figure}[ht]
\centering
\includegraphics[scale = 0.2]{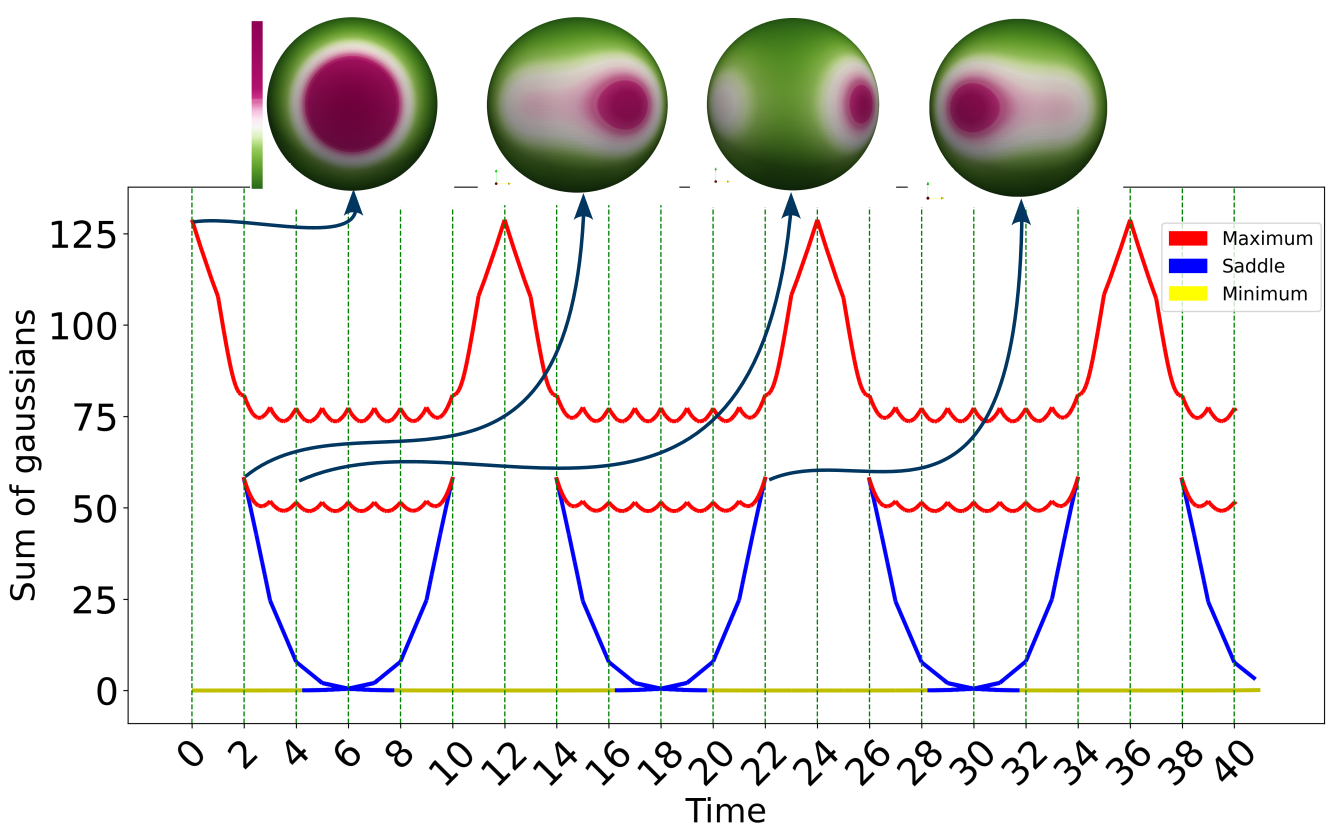}
\caption{Cerf diagram of a synthetic sum-of-Gaussians scalar field, where the centres of the two Gaussians move in opposite directions along the equator of a sphere at a speed of $\pi/12$. We observe a (half) period of 12 from the Cerf diagram because the centres meet at the opposite pole. The centres split at $t=0$, and a second maximum appears at $t = 2$ resulting in a birth event. Both maxima are visible at $t=4$, the centres return to their starting position at $t = 24$, soon after the second maximum dies at $t=22$.
}
\label{fig:cerf_2Gaussians}
\end{figure}
\begin{figure*}[!htb]
\centering
    \begin{subfigure}[b]{0.2\linewidth}
        \includegraphics[width=\linewidth, alt = {Tracks of maxima by a seed search inside central finger at t = 69}]{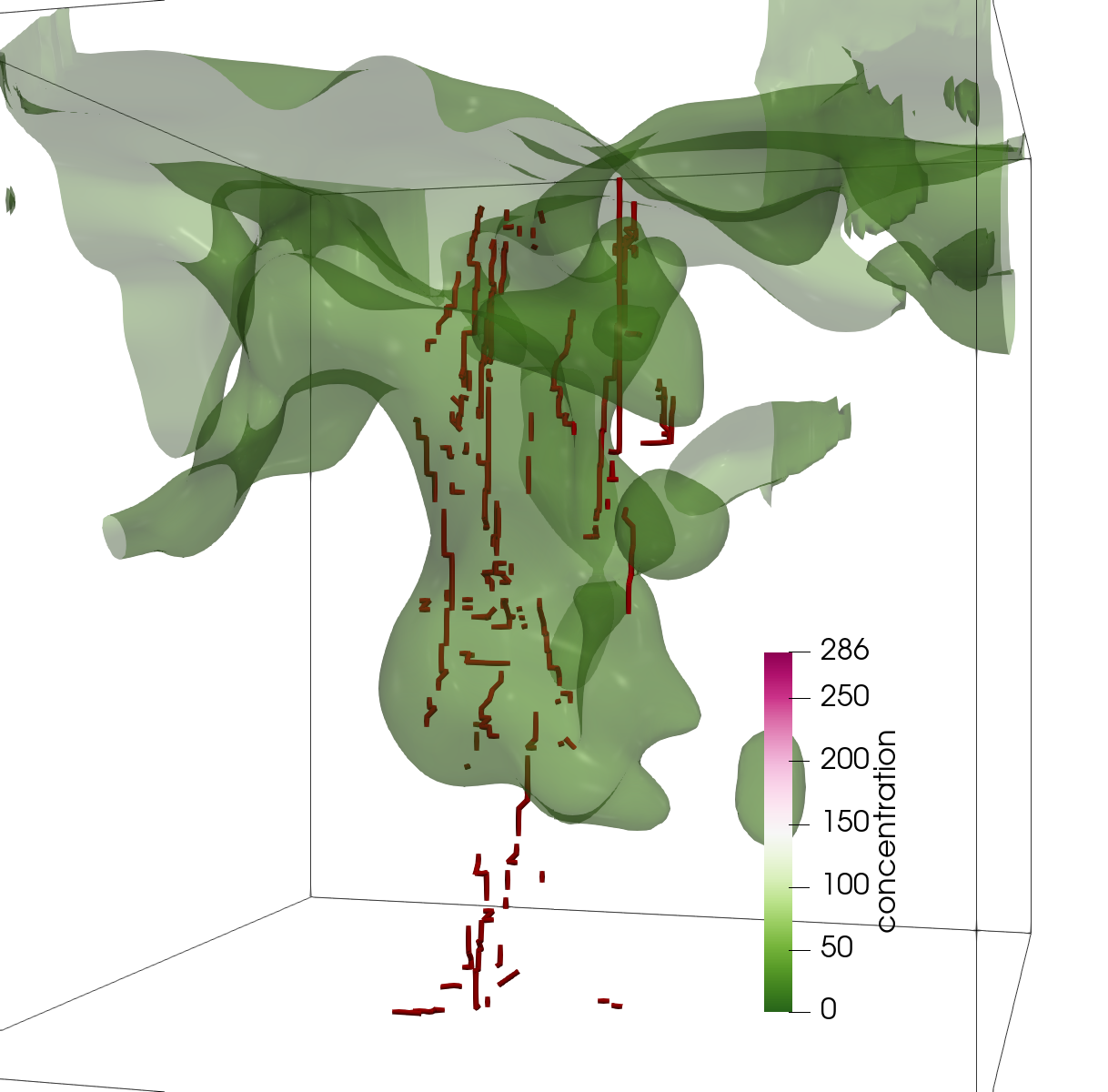}
        \caption{}
        \label{fig:Fingers_cerf_leq69}
    \end{subfigure}
    \begin{subfigure}[b]{0.2\linewidth}
        \includegraphics[width=\linewidth, alt = {Tracks of maxima by a seed search inside central finger at t = 71}]{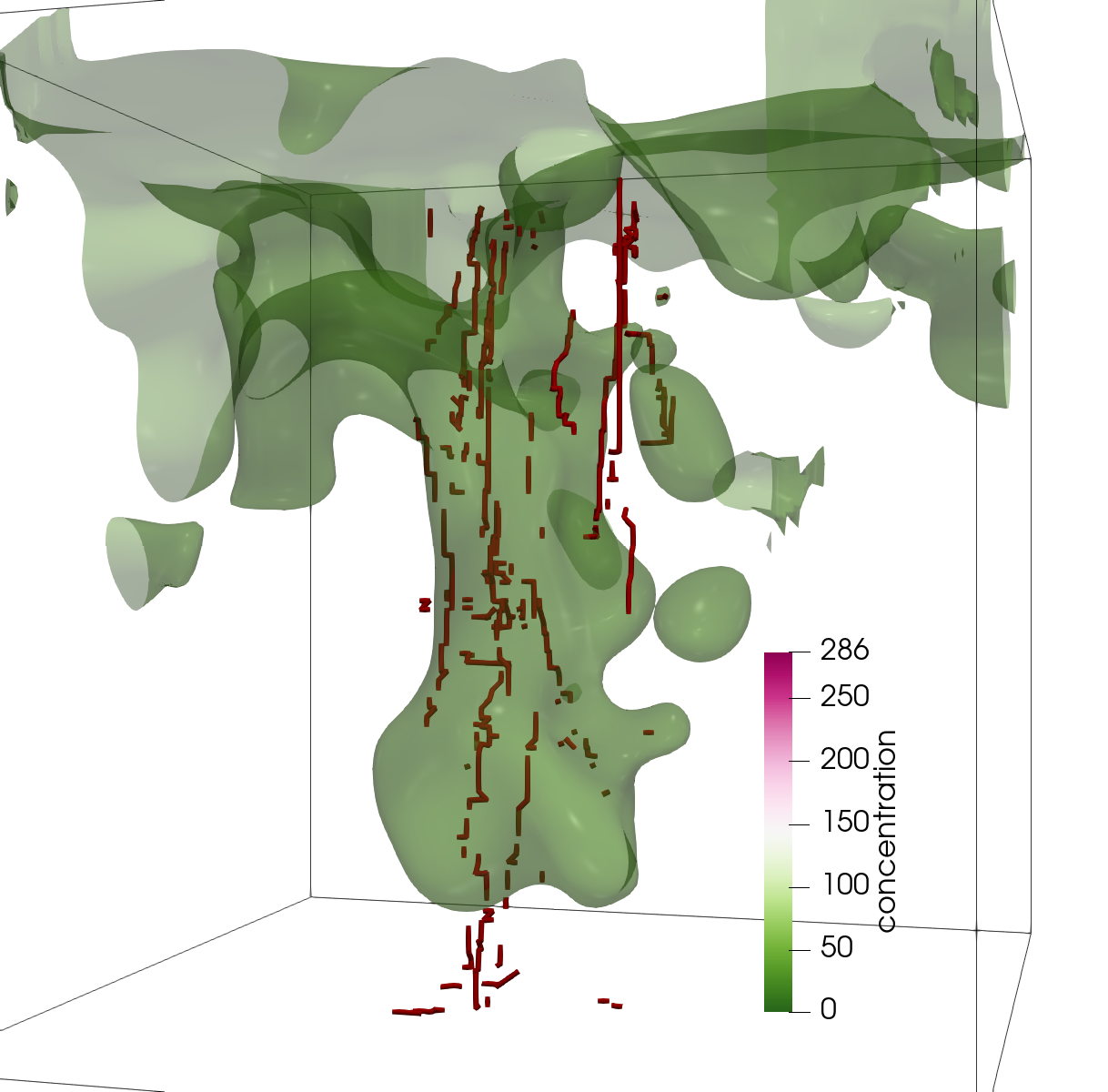}
        \caption{}
        \label{fig:Fingers_cerf_leq71}
    \end{subfigure}
    \begin{subfigure}[b]{0.2\linewidth}
        \includegraphics[width=\linewidth, alt = {Tracks of maxima by a seed search inside central finger at t = 73}]{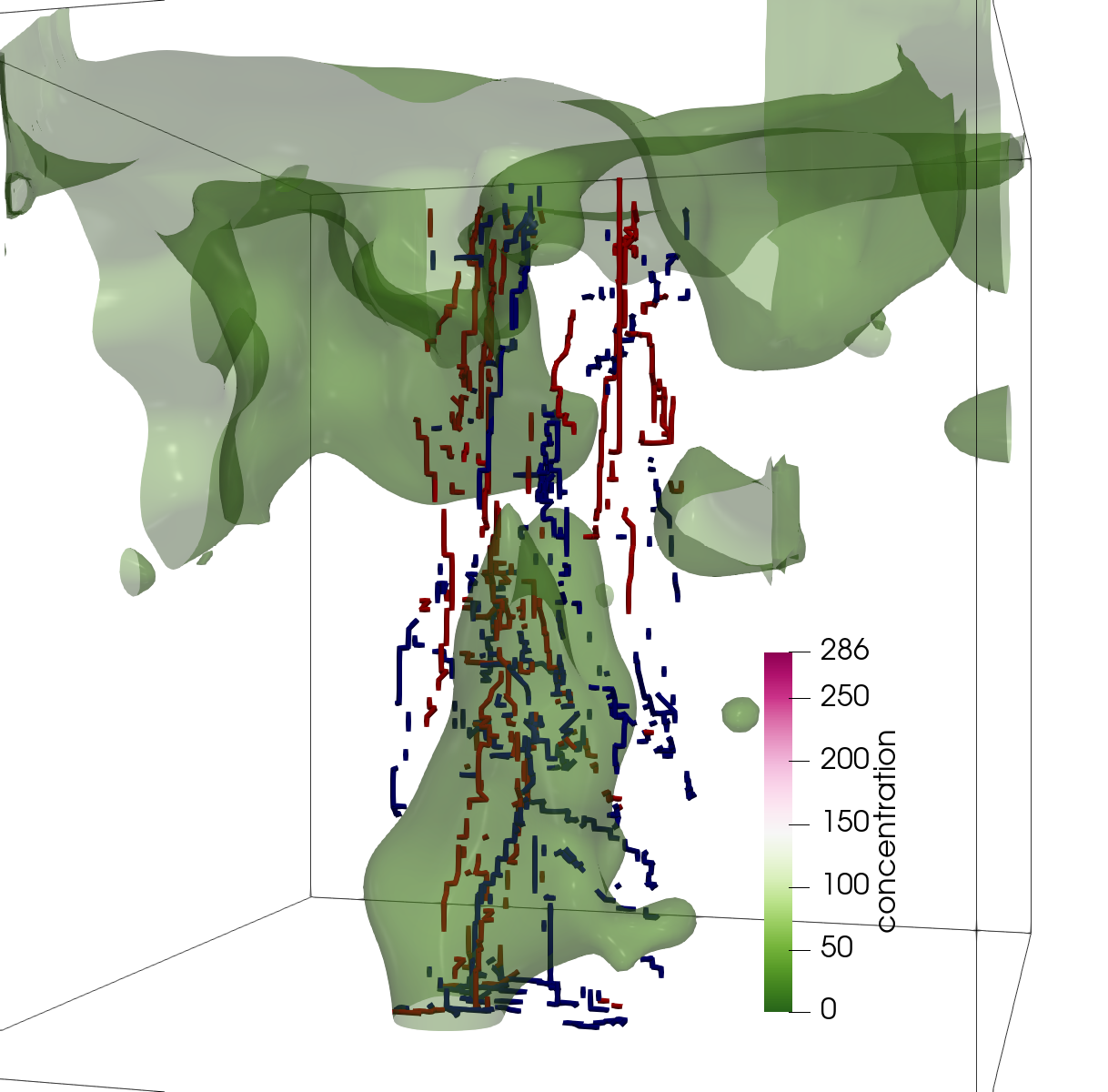 }
        \caption{}
        \label{fig:Fingers_cerf_leq73}
    \end{subfigure}
    \begin{subfigure}[b]{0.38\linewidth}
        \includegraphics[width=\linewidth]{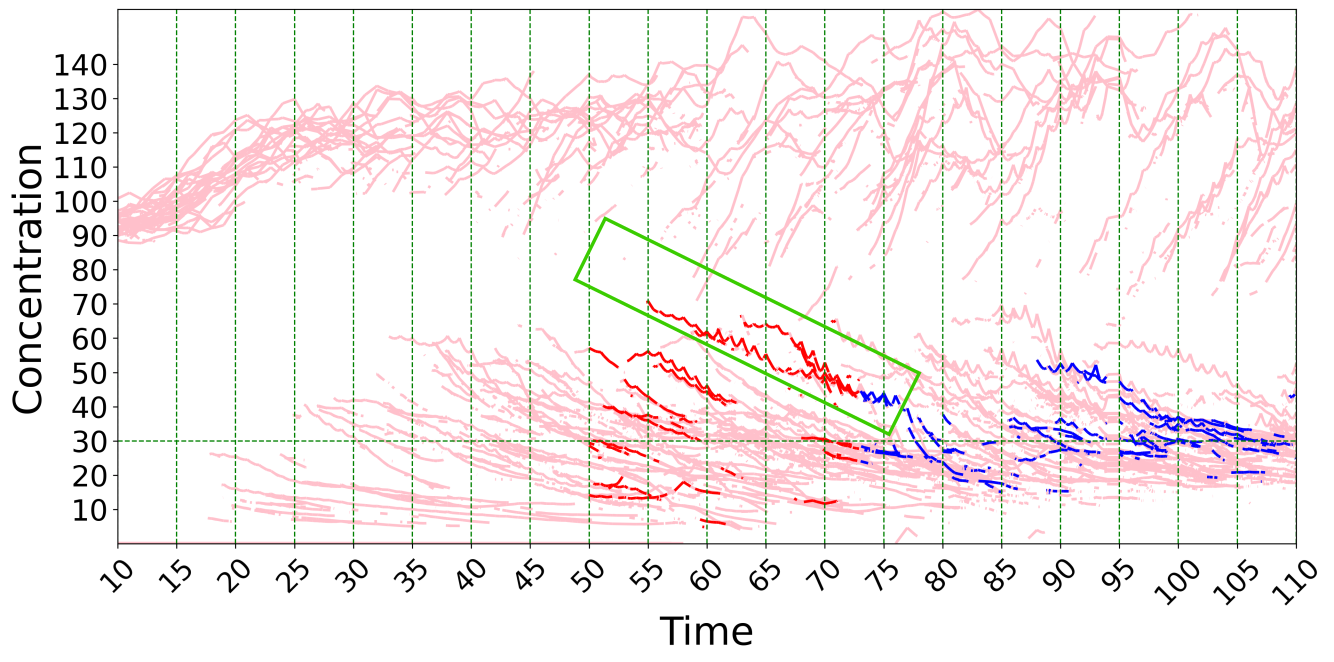}
        \caption{ }
        \label{fig:Fingers_cerf_central}
    \end{subfigure}
\caption{Studying the evolution of the central finger in the viscous fingers dataset using the Cerf diagram. Tracks of maxima extracted by a search seeded inside the finger at time (a)~$t=69$, (b)~$t=71$, and (c)~$t=73$. (d)~Cerf diagram contains two groups of maxima (upper / lower). Cerf arcs corresponding to maxima tracks are highlighted, distinguishing between maxima born prior to $t=73$ (red) and after $t=73$ (blue).}
\label{fig:Fingers_cerf}
\end{figure*}
\inlineheading{Crossings.} 
The vertex diagram encodes sufficient information to derive the tracking graph of critical points, and hence enables the tracking of the topological features associated with critical points. We present an exhaustive list of the types of crossings between two time instances $t_1$ and $t_2$ in a vertex diagram of a PL Morse family. In each case, we also describe the effect on the tracking graph within the time interval $[t_1,t_2]$.

\textit{Critical–critical crossing}: Two critical points cross without a change in their indices; no corresponding vertex movement occurs in the tracking graph.

\textit{Regular–regular crossing}: Two regular points cross and remain regular; no corresponding vertex movement in the tracking graph.

\textit{Critical–regular crossing}: A critical point and a regular point cross following which neither transitions between critical and regular; no corresponding vertex movement in the tracking graph.

\textit{Critical–regular switch}: A critical point and a regular point cross, then both transition; in the tracking graph, the feature at the critical point before the crossing moves linearly toward the second vertex.

\textit{Critical–critical index swap}: Two critical points cross and remain critical but their Morse indices interchange; in the tracking graph, the feature at each 
 vertex moves linearly to the other vertex.

\textit{Birth crossing}: Two regular points cross and both transition to become critical; in the tracking graph, two new features are born at time of crossing at the two vertices.

\textit{Death crossing}: Two critical points cross and both become regular; in the tracking graph, the two corresponding features die at the time of crossing.



%
%
The Betti number $\beta_{i}^{t}(v)$ of the lower link of a vertex $v$ changes after a crossing. We characterize this change as follows.

\begin{theorem}\label{thm:betti-update}
Let $\left\{f_t\right\}$ be a generic one-parameter PL family. If two vertices $v$ and $u$ cross between time instances $t_1$ and $t_2$ in the vertex diagram of $\left\{f_t\right\}$ and there is no other crossing within the time range $[t_1,t_2]$, then the Betti numbers of the lower link of $v$ and $u$ satisfy the following relationship:
{
\setlength{\abovedisplayskip}{0pt}
\setlength{\belowdisplayskip}{0pt}
   \[  \sum_{i=0}^{d} {(-1)}^{i} (\beta_{i}^{t_2}(v) - \beta_{i}^{t_1}(v)) \;=\; - \sum_{i=0}^{d} {(-1)}^{i} (\beta_{i}^{t_2}(u) - \beta_{i}^{t_1}(u)) \]
}
\end{theorem}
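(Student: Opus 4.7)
The plan is to reduce the identity to Euler characteristic bookkeeping. By the Euler--Poincar\'e formula applied to reduced homology, $\sum_{i=0}^{d}(-1)^{i}\beta_{i}^{t}(w)=-\widetilde{\chi}(\Lk_{t}^{-}(w))$ for any vertex $w$, and since $\widetilde{\chi}$ and $\chi$ differ only by a constant the theorem is equivalent to
\begin{equation*}
\bigl(\chi(\Lk_{t_2}^{-}(v))-\chi(\Lk_{t_1}^{-}(v))\bigr)+\bigl(\chi(\Lk_{t_2}^{-}(u))-\chi(\Lk_{t_1}^{-}(u))\bigr)=0.
\end{equation*}
I attack this by expressing $\chi(K)=\sum_{\sigma\in K}(-1)^{\dim\sigma}$ and identifying exactly which simplices enter or leave each lower link.

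If $u\notin\Lk(v)$, then on $[t_1,t_2]$ neither lower link depends on the value at the other vertex and both differences vanish. Otherwise I assume $u\in\Lk(v)$ and, without loss of generality, $f_{t_1}(u)<f_{t_1}(v)$ while $f_{t_2}(u)>f_{t_2}(v)$. Because $(u,v)$ is the only crossing in $[t_1,t_2]$, for every other vertex $w$ the signs of $f_{t}(w)-f_{t}(v)$ and $f_{t}(w)-f_{t}(u)$ are each constant on $[t_1,t_2]$. A short case analysis then rules out any $w\in\Lk(\{u,v\})$ whose value lies strictly between $f_{t}(u)$ and $f_{t}(v)$ during the transit: the pair switch would force $w$ to cross either $u$ or $v$. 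Hence every vertex of $\Lk(\{u,v\})$ is either below both of $u,v$ throughout $[t_1,t_2]$ or above both throughout, and I write $L^{-}$ for the full subcomplex of $\Lk(\{u,v\})$ on the ``below'' vertices.

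With this no-sandwiching property, I identify the simplex sets exactly. Since no vertex other than $u$ changes its order relative to $v$ on $[t_1,t_2]$, we have $\Lk_{t_2}^{-}(v)\subseteq\Lk_{t_1}^{-}(v)$, and the difference $\Lk_{t_1}^{-}(v)\setminus\Lk_{t_2}^{-}(v)$ consists of precisely the simplices of $\Lk_{t_1}^{-}(v)$ containing $u$; writing any such simplex as $\{u\}\cup\tau$ with $\tau\in\Lk(\{u,v\})$ forces $\tau\in L^{-}\cup\{\emptyset\}$. Dually, $\Lk_{t_2}^{-}(u)\setminus\Lk_{t_1}^{-}(u)$ consists of $\{v\}\cup\tau$ for the same collection of $\tau$. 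Because $\dim(\{u\}\cup\tau)=\dim(\{v\}\cup\tau)=\dim\tau+1$, summing $(-1)^{\dim\sigma}$ over each family yields the same value, namely $1-\chi(L^{-})$. Therefore the drop in $\chi(\Lk^{-}(v))$ equals the rise in $\chi(\Lk^{-}(u))$, the two contributions cancel, and the displayed identity follows.

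The main obstacle is the no-sandwiching claim in the second paragraph: that a single pairwise crossing in $[t_1,t_2]$ prevents any vertex of $\Lk(\{u,v\})$ from being squeezed strictly between $f_{t}(u)$ and $f_{t}(v)$. This is the only genuinely geometric ingredient; once it is established, the remainder is clean simplex bookkeeping via the Euler--Poincar\'e formula and the definition of the lower link.
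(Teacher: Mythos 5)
Your proof is correct, but it takes a genuinely different route from the paper. The paper identifies the symmetric difference of the lower links as cones over the common part $I_{uv}=\Lk_{t_1}^-(v)\cap\Lk_{t_1}^-(u)$, writes $\Lk_{t_2}^-(u)=\Lk_{t_1}^-(u)\cup(v\star I_{uv})$ and $\Lk_{t_2}^-(v)=\Lk_{t_1}^-(v)\setminus(u\star I_{uv})$, and then extracts the identity from the alternating sum of ranks in two Mayer--Vietoris long exact sequences, using contractibility of the cones. You instead collapse everything to Euler characteristics via Euler--Poincar\'e and count the entering/leaving simplices directly: the leaving simplices of $\Lk^-(v)$ are $\{u\}\cup\tau$ and the entering simplices of $\Lk^-(u)$ are $\{v\}\cup\tau$ for the \emph{same} family of $\tau$, each contributing $1-\chi(L^-)$ with opposite signs. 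The two arguments hinge on the same geometric facts, but you make explicit the no-sandwiching property (a vertex strictly between $f_t(u)$ and $f_t(v)$ would have to cross one of them during the swap, contradicting the single-crossing hypothesis), which the paper leaves implicit in its formulas for $\Lk_{t_2}^-(u)$ and $\Lk_{t_2}^-(v)$; your version also correctly restricts the cone base to the link of the \emph{edge} $uv$, whereas the paper's $I_{uv}$ is defined only as an intersection of vertex links. What the paper's approach buys is machinery that could in principle track the individual Betti numbers (via the connecting maps), not just their alternating sum; what yours buys is elementarity and transparency --- no homology is needed beyond the Euler--Poincar\'e formula, and the cancellation is visible at the level of simplex counts.
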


\begin{proof}
Let \(\Lk_{t_1}^-(v)\) and \(\Lk_{t_1}^-(u)\) be the lower links of vertices \(v\) and \(u\) at time \(t_1\), and $I_{uv}$ denote their intersection, $I_{uv} \;=\;\Lk_{t_1}^-(v)\,\cap\,\Lk_{t_1}^-(u).$ Without loss of generality, assume that $f_{t_1}(v) > f_{t_1}(u)$. If vertices $v$ and $u$ are connected by an edge of $K$, then at a time instance $t_2$ after the crossing, their lower links may be expressed as follows:
{\setlength{\abovedisplayskip}{1pt}
\setlength{\belowdisplayskip}{0pt}
\begin{align*}
  \Lk_{t_2}^-(u)
    &= \Lk_{t_1}^-(u)\;\cup\;\bigl(v\star I_{uv}\bigr),\\
  \Lk_{t_2}^-(v)
    &= \Lk_{t_1}^-(v)\;\setminus\;\bigl(u\star I_{uv}\bigr).
\end{align*}
}
Applying the Mayer–Vietoris sequence to the decomposition
\(\Lk_{t_2}^-(u)=\Lk_{t_1}^-(u)\,\cup\,(v\star I_{uv})\)
yields the long exact sequence
{
\setlength{\abovedisplayskip}{0.5pt}
\setlength{\belowdisplayskip}{0.5pt}
\[
\begin{aligned}
  \cdots\;&\to\;\tilde H_n(I_{uv})
  \;\to\;\tilde H_n\bigl(\Lk_{t_1}^-(u)\bigr)\oplus\tilde H_n(v\star I_{uv})\\
  &\to\;\tilde H_n\bigl(\Lk_{t_2}^-(u)\bigr)
  \;\to\;\tilde H_{n-1}(I_{uv})
  \;\to\;\cdots
\end{aligned}
\]
}
Similarly, we obtain a long exact sequence for \(\Lk_{t_2}^-(v)=\Lk_{t_1}^-(v)\setminus(u\star I_{uv})\).
Since \(v\star I_{uv}\) and \(u\star I_{uv}\) are contractible, their reduced homology groups vanish, and the above exact sequences simplify to
{
\setlength{\abovedisplayskip}{0pt}
\setlength{\belowdisplayskip}{0pt}
\[
  \cdots
  \to \tilde H_n(I_{uv})
  \to \tilde H_n\bigl(\Lk_{t_1}^-(u)\bigr)
  \to \tilde H_n\bigl(\Lk_{t_2}^-(u)\bigr)
  \to \tilde H_{n-1}(I_{uv})
  \to \cdots,
\]
}
\[
  \cdots
  \to \tilde H_n(I_{uv})
  \to \tilde H_n\bigl(\Lk_{t_2}^-(v)\bigr)
  \to \tilde H_n\bigl(\Lk_{t_1}^-(v)\bigr)
  \to \tilde H_{n-1}(I_{uv})
  \to \cdots.
\]
Let \(r_i\) denote the rank of \(\tilde H_{i-1}(I_{uv})\).  Counting alternating sums of ranks in both exact sequences, we obtain, 
{
\setlength{\abovedisplayskip}{0pt}
\setlength{\belowdisplayskip}{0.5pt}
\begin{align*}
  \sum_{i=0}^{d}(-1)^i r_i
  \;+\;
  \sum_{i=0}^{d}(-1)^{\,i+1}\beta_i^{t_1}(v)
  \;+\;
  \sum_{i=0}^{d}(-1)^i\beta_i^{t_2}(v)
  \;&=\;0,\\
  \sum_{i=0}^{d}(-1)^i r_i
  \;+\;
  \sum_{i=0}^{d}(-1)^{\,i+1}\beta_i^{t_1}(u)
  \;+\;
  \sum_{i=0}^{d}(-1)^i\beta_i^{t_2}(u)
  \;&=\;0.
\end{align*}
}
Equating the two expressions, removing the common term involving $r_i$, and combining the remaining two terms, we obtain the desired relationship.
\end{proof}
\vspace{-1.1em}  
\inlineheading{Topological descriptor.}
The evolution of the Betti numbers characterize crossings in the vertex diagram. We aggregate the evolution over all critical points to develop a topological descriptor of a time-varying scalar field. 
\begin{definition}[Local TV-ECC]\label{defin:tvecc-local}
Given a generic one-parameter PL family $\{f_t\}$ on $\Mspace$ and a vertex $v \in V$, define the \emph{local time-varying Euler characteristic curve} $\localTVECC{v}: \Real \times [0,1] \to \Real$ as 
\[
\localTVECC{v}(s,t) := 
\begin{cases}
\displaystyle\sum_{i=0}^{d}(-1)^i\,\beta_i^{t}(v) & \text{if } f_t(v) \leq s, \\
0 & \text{otherwise.}
\end{cases}
\]
\end{definition}
Dlotko and Gurnari~\cite{ecc} introduced the Euler Characteristic Curve ($\orecc$) as a shape invariant on a filtered simplicial complex, and demonstrated its use for topological data analysis.
For a single parameter filtration $K_s$ of the simplicial complex $K$, $\orecc$ is defined as $\orecc(K,s) = \chi(K_s)$.
The $\orecc$ can be expressed as a sum of $\localTVECC{v}$ restricted to a single time instance $t$ (see supplementary material for the proof), 
\[
\orecc(K_{t},s)= \sum_{v\in V} \localTVECC{v}(s,t),
\]
where $K_{t}$ is the simplicial complex $K$ together with the lower star filtration induced by $f_{t}$.
The global version of $\localTVECC{v}$ is obtained as a sum over all vertices.

\cancel{It is possible to establish the stability of the $L^p$ distance between ECC curves for scalar fields $f_1$ and $f_2$ by applying \cref{thm:betti-update} for the time-varying family $f_t:= t f_1 + (1-t)  f_2$, using finite-element method. }
\begin{definition}[TV-ECC]
Given a generic one-parameter PL family $\left\{f_t\right\}$, the \emph{time-varying Euler Characteristic Curve} $\TVECC{\{f_t\}}:\Real \times [0,1] \to \Real$ is defined as 
 \[ \TVECC{\{f_t\}}(s,t):= \sum_{v \in V} \localTVECC{v}(s,t). \]
\end{definition}

\section{Computing and comparing Cerf diagrams}
In this section, we describe an algorithm to compute the Cerf diagram and to compare two diagrams.
The input consists of the scalar field specified at each vertex $v$ for all $T$ time steps.
The scalar values at a vertex are linearly interpolated between every pair of consecutive time steps resulting in a time-varying PL scalar field.
\revision{Since the criticality of a vertex is determined by its lower link, the algorithm tracks all crossings in the vertex diagram by examining pairs of vertices that lie in the link of each other.} While processing a crossing of vertices $u$ and $v$, the algorithm updates $\Lk^-(u)$ and $\Lk^-(v)$ and recomputes $\beta(u)$ and $\beta(v)$ to determine the criticality of the vertices. If the vertices are critical, the corresponding arc between the vertices in the vertex diagram is declared as a Cerf arc. 
The Cerf diagram is represented as a collection of 6-tuples $\{(t_1, f_{t_1}(v),t_2,f_{t_2}(v),v,\beta(v))\}$ where $(t_1, f_{t_1}(v))$ and $(t_2,f_{t_2}(v))$ are the end points of a Cerf arc, $v$ represents the vertex location of the critical point in the time interval $[t_1,t_2]$, and $\beta(v)$ is its (fixed) homological index in that time interval. The detailed algorithm is presented in the supplementary material.

We propose a distance measure between two generic one-parameter PL families and use it to compare two time-varying scalar fields. The distance between two families $\{f_t \}$ and $\{g_t \}$, $t \in I$, is defined as the aggregated  difference between $\TVECC{f_t}$ and $\TVECC{g_t}$,  
{
\setlength{\abovedisplayskip}{0pt}
\setlength{\belowdisplayskip}{0pt}
\begin{equation}
\Delta(\{f_t\},\{g_t\}) := \int_I \int_{ \Real} \mid \ecc_{f_t}(s,t) - \ecc_{g_t} (s,t) \mid \mathrm{d}s\,\mathrm{d}t.
\label{eq:dist-tvecc}
\end{equation}
}

\section{Experimental results}
We perform computational experiments on two datasets that constitute time-varying scalar fields. 
\revision{The results indicate the potential utility of the Cerf diagram to identify interesting patterns of critical points. The Cerf diagram used in conjunction with the corresponding spatial tracks of selected critical points can help identify evolution of interesting features in the spatial domain.} \Cref{fig:cerf_2Gaussians} presents an illustration on a synthetic dataset, where the Cerf diagram provides a useful static overview of the time-varying field, captures the periodicity, and helps locate interesting time instances.

\subsection{2D vortex street}
The 2D von K\'{a}rm\'{a}n vortex street dataset is a simulation of a flow around a cylinder that exhibits periodic vortex shedding. The speed (velocity magnitude) is available as a scalar field over a $400\times 50$ grid across $1001$ time steps~\cite{weinkauf2010}. \Cref{fig:2dVortexStreet} shows the Cerf diagram computed  over a time window $[150,300]$ and tracks of maxima in the spatial domain from $t=0$ until $t=400$.
It also shows the matrix of pairwise distances \revision{(\cref{eq:dist-tvecc})} between Cerf diagrams computed for $50$-step time windows starting at $t = 150$ with a shift of $5$.

\inlineheading{Identifying and classifying topological features.} 
We focus on the maxima because they correspond to vortices, the primary feature of interest in this dataset. The Cerf diagram helps identify three types of maxima. The maxima with speed greater than $1.3$ (black dashed line) that exhibit a sinusoidal pattern are highlighted in cyan in the spatial domain. They are located in the vicinity of the cylinder. The maxima with speed between $1.12$ and $1.3$ do not exhibit significant variation in the speed and lie further away from the cylinder. They are shown in red in the spatial domain and correspond to the vortex street. 
Other interesting maxima with speeds lower than $1.12$ exhibit periodic birth-death behavior (black boxes) with a lifetime of $\sim 37$ and occur near the cylinder.

\inlineheading{Investigating periodicity and temporal events.} 
While the periodic nature of evolution of critical points is visible from the Cerf diagram (region with speed $<0.7$), the repeating pattern is clear in the distance matrix. Both the full period (75) and half period ($\sim 37$) can be deduced from the diagonal patterns in the matrix, and they match with previously reported results~\cite{extremumgraphs-narayanan,sridharamurthy2020}. The saddle-maximum pair (green circles) appear at significantly different speeds for $t = 205$ and $t = 225$, and for other pairs of time steps in the time window that starts at $205$ and $225$. This may be a reason why the distance matrix exhibits dark bands in the region corresponding to this pair of windows, reflecting a difference in the evolution of features within the two windows. 

\subsection{Viscous fingers}
The viscous fingers dataset is the result of a stochastic simulation of the mixing of high concentration salt into water~\cite{viscousfingers2016dataset}. We analyze the $33^{rd}$ member of an ensemble at a smoothing length of 20 and resampled over a $101 \times 101 \times 101$ grid across 120 time steps~\cite{favelier2016visualizing}. \Cref{fig:Fingers_cerf} shows the Cerf diagram, displaying only maxima.

\inlineheading{Identifying and classifying topological features.}  
The upper band of Cerf arcs, with salt concentration above $70$, correspond to maxima that form near the mixing interface close to the upper boundary of the grid.
The lower band of Cerf arcs correspond to maxima located in the interior of the grid that contribute to the formation of the fingers.

\inlineheading{Investigating fingers of interest.}  
The central finger, represented by the isosurface at value $30$ in \cref{fig:Fingers_cerf}, elongates and eventually splits at $t=73$.
Tracks in the spatial domain (red) show the evolution of some maxima that contribute to the formation of this finger until $t=73$. Tracks of all maxima born after time $73$ are shown in blue. These tracks are extracted by a search in the vicinity of a seed point inside a finger. The corresponding arcs are highlighted in the Cerf diagram. The shorter red tracks born between $t=50$ and $t=55$ correspond to the formation of this finger. A key maximum contributing to this finger shows up as a prominent track (green box). This maximum is born at $t=55$ and dies around $t=77$ when the associated finger also disappears.


\section{Conclusions}
We introduced a PL adaptation of Morse-Cerf theory, including a structural characterization and methods for computing and comparing Cerf diagrams. \revision{Demonstrating the practical utility to the analysis of time-varying scalar fields requires future efforts towards the development of methods for topological simplification, the study of Cerf moves that enable local modifications of the Cerf diagram, a comprehensive analysis of runtime performance of the algorithm, and comparisons with alternative methods.}

\acknowledgments{%
This work is partially supported by the PMRF, MoE Govt. of India, a SERB grant CRG/2021/005278, and a CSR grant from Ittiam Systems for the Equitable AI Lab at CSA, IISc Bangalore. VN acknowledges support from the Alexander von Humboldt Foundation, and Berlin MATH+ under the Visiting Scholar program. 
}

\balance

\bibliographystyle{abbrv-doi-hyperref}
\bibliography{references}

\begin{thebibliography}{10}

\bibitem{Bubenik2024}
\href{https://doi.org/10.1007/978-3-031-57204-3_4}{P.~Bubenik and M.~J.
  Catanzaro}.
\newblock \href{https://doi.org/10.1007/978-3-031-57204-3_4}{Multiparameter
  persistent homology via generalized morse theory}.
\newblock \href{https://doi.org/10.1007/978-3-031-57204-3_4}{In A.~Bahri,
  L.~Jeffrey, T.~Panov, D.~Stanley, and S.~Theriault, eds., {\em Toric Topology
  and Polyhedral Products}},
  \href{https://doi.org/10.1007/978-3-031-57204-3_4}{pp. 55--79}.
  \href{https://doi.org/10.1007/978-3-031-57204-3_4}{Springer Nature
  Switzerland}, \href{https://doi.org/10.1007/978-3-031-57204-3_4}{Cham},
  \href{https://doi.org/10.1007/978-3-031-57204-3_4}{2024}.
  \href{https://doi.org/10.1007/978-3-031-57204-3_4}
{doi: {{%
10\hspace{.1pt}\discretionary{.}{%
}{.}\hspace{.4pt}1007\discretionary{/}{%
}{/}978\discretionary{%
}{-}{-}3\discretionary{%
}{-}{-}031\discretionary{%
}{-}{-}57204\discretionary{%
}{-}{-}3\_4}}}


\bibitem{cerf1970stratification}
\href{https://doi.org/10.1007/BF02684687}{J.~Cerf}.
\newblock \href{https://doi.org/10.1007/BF02684687}{La stratification naturelle
  des espaces de fonctions différentiables réelles et le théorème de la
  pseudo-isotopie}.
\newblock \href{https://doi.org/10.1007/BF02684687}{{\em Publications
  Mathématiques de l'Institut des Hautes Études Scientifiques}},
  \href{https://doi.org/10.1007/BF02684687}{39:5--173},
  \href{https://doi.org/10.1007/BF02684687}{1970}.
  \href{https://doi.org/10.1007/BF02684687}
{doi: {{%
10\hspace{.1pt}\discretionary{.}{%
}{.}\hspace{.4pt}1007\discretionary{/}{%
}{/}BF02684687}}}


\bibitem{cohen2006vines}
D.~Cohen-Steiner, H.~Edelsbrunner, and D.~Morozov.
\newblock Vines and vineyards by updating persistence in linear time.
\newblock In {\em Proceedings of the twenty-second annual symposium on
  Computational geometry}, pp. 119--126, 2006.

\bibitem{das2024tveg}
\href{https://doi.org/https://doi.org/10.1111/cgf.15162}{S.~Das,
  R.~Sridharamurthy, and V.~Natarajan}.
\newblock \href{https://doi.org/https://doi.org/10.1111/cgf.15162}{Time-varying
  extremum graphs}.
\newblock \href{https://doi.org/https://doi.org/10.1111/cgf.15162}{{\em
  Computer Graphics Forum}},
  \href{https://doi.org/https://doi.org/10.1111/cgf.15162}{43(6):e15162},
  \href{https://doi.org/https://doi.org/10.1111/cgf.15162}{2024}.
  \href{https://doi.org/10.1111/cgf.15162}
{doi: {{%
10\hspace{.1pt}\discretionary{.}{%
}{.}\hspace{.4pt}1111\discretionary{/}{%
}{/}cgf\hspace{.1pt}\discretionary{.}{%
}{.}\hspace{.4pt}15162}}}


\bibitem{dey2022computational}
T.~K. Dey and Y.~Wang.
\newblock {\em Computational Topology for Data Analysis}.
\newblock Cambridge University Press, 2022.

\bibitem{ecc}
\href{https://doi.org/10.1093/gigascience/giad094}{P.~Dłotko and D.~Gurnari}.
\newblock \href{https://doi.org/10.1093/gigascience/giad094}{Euler
  characteristic curves and profiles: a stable shape invariant for big data
  problems}.
\newblock \href{https://doi.org/10.1093/gigascience/giad094}{{\em
  GigaScience}},
  \href{https://doi.org/10.1093/gigascience/giad094}{12:giad094},
  \href{https://doi.org/10.1093/gigascience/giad094}{11 2023}.
  \href{https://doi.org/10.1093/gigascience/giad094}
{doi: {{%
10\hspace{.1pt}\discretionary{.}{%
}{.}\hspace{.4pt}1093\discretionary{/}{%
}{/}gigascience\discretionary{/}{%
}{/}giad094}}}


\bibitem{edelsbrunner2010computational}
H.~Edelsbrunner and J.~Harer.
\newblock {\em Computational Topology: An Introduction}.
\newblock American Mathematical Soc., 2010.

\bibitem{edelsbrunner2008timevaryingreeb}
\href{https://doi.org/https://doi.org/10.1016/j.comgeo.2007.11.001}{H.~Edelsbrunner,
  J.~Harer, A.~Mascarenhas, V.~Pascucci, and J.~Snoeyink}.
\newblock
  \href{https://doi.org/https://doi.org/10.1016/j.comgeo.2007.11.001}{Time-varying
  reeb graphs for continuous space–time data}.
\newblock
  \href{https://doi.org/https://doi.org/10.1016/j.comgeo.2007.11.001}{{\em
  Computational Geometry}},
  \href{https://doi.org/https://doi.org/10.1016/j.comgeo.2007.11.001}{41(3):149--166},
  \href{https://doi.org/https://doi.org/10.1016/j.comgeo.2007.11.001}{2008}.
  \href{https://doi.org/10.1016/j.comgeo.2007.11.001}
{doi: {{%
10\hspace{.1pt}\discretionary{.}{%
}{.}\hspace{.4pt}1016\discretionary{/}{%
}{/}j\hspace{.1pt}\discretionary{.}{%
}{.}\hspace{.4pt}comgeo\hspace{.1pt}\discretionary{.}{%
}{.}\hspace{.4pt}2007\hspace{.1pt}\discretionary{.}{%
}{.}\hspace{.4pt}11\hspace{.1pt}\discretionary{.}{%
}{.}\hspace{.4pt}001}}}


\bibitem{simulationsimplicity}
\href{https://doi.org/10.1145/73393.73406}{H.~Edelsbrunner and E.~Mucke}.
\newblock \href{https://doi.org/10.1145/73393.73406}{Simulation of simplicity:
  A technique to cope with degenerate cases in geometric algorithms}.
\newblock \href{https://doi.org/10.1145/73393.73406}{{\em ACM Trans. Graph}},
  \href{https://doi.org/10.1145/73393.73406}{9},
  \href{https://doi.org/10.1145/73393.73406}{08 1996}.
  \href{https://doi.org/10.1145/73393.73406}
{doi: {{%
10\hspace{.1pt}\discretionary{.}{%
}{.}\hspace{.4pt}1145\discretionary{/}{%
}{/}73393\hspace{.1pt}\discretionary{.}{%
}{.}\hspace{.4pt}73406}}}


\bibitem{favelier2016visualizing}
\href{https://hal.science/hal-01359694}{G.~Favelier, C.~Gueunet, and
  J.~Tierny}.
\newblock \href{https://hal.science/hal-01359694}{Visualizing ensembles of
  viscous fingers}.
\newblock \href{https://hal.science/hal-01359694}{In {\em {IEEE Scientific
  Visualization Contest}}}. \href{https://hal.science/hal-01359694}{Baltimore,
  United States}, \href{https://hal.science/hal-01359694}{Oct. 2016}.

\bibitem{gay2012connected}
D.~T. Gay, K.~Wehrheim, and C.~Woodward.
\newblock Connected cerf theory.
\newblock {\em Preprint}, 2012.
\newblock Available at \url{https://math.berkeley.edu/~katrin/papers/cerf.pdf}.

\bibitem{Grun-Kuhn-PLMorse}
\href{https://doi.org/doi:10.1515/advgeom-2022-0027}{R.~Grunert, W.~Kühnel,
  and G.~Rote}.
\newblock \href{https://doi.org/doi:10.1515/advgeom-2022-0027}{Pl morse theory
  in low dimensions}.
\newblock \href{https://doi.org/doi:10.1515/advgeom-2022-0027}{{\em Advances in
  Geometry}},
  \href{https://doi.org/doi:10.1515/advgeom-2022-0027}{23(1):135--150},
  \href{https://doi.org/doi:10.1515/advgeom-2022-0027}{2023}.
  \href{https://doi.org/doi:10.1515/advgeom-2022-0027}
{doi: {{%
doi\discretionary{:}{%
}{:}10\hspace{.1pt}\discretionary{.}{%
}{.}\hspace{.4pt}1515\discretionary{/}{%
}{/}advgeom\discretionary{%
}{-}{-}2022\discretionary{%
}{-}{-}0027}}}


\bibitem{hatcher1973pseudoisotopies}
\href{https://www.numdam.org/item/AST_1973__6__1_0/}{A.~E. Hatcher and J.~B.
  Wagoner}.
\newblock \href{https://www.numdam.org/item/AST_1973__6__1_0/}{{\em
  Pseudo-isotopies of Compact Manifolds}}.
\newblock \href{https://www.numdam.org/item/AST_1973__6__1_0/}{Number~6 in
  Astérisque}. \href{https://www.numdam.org/item/AST_1973__6__1_0/}{Société
  Mathématique de France},
  \href{https://www.numdam.org/item/AST_1973__6__1_0/}{1973}.

\bibitem{heine2016survey}
C.~Heine, H.~Leitte, M.~Hlawitschka, F.~Iuricich, L.~De~Floriani,
  G.~Scheuermann, H.~Hagen, and C.~Garth.
\newblock A survey of topology-based methods in visualization.
\newblock {\em Computer Graphics Forum}, 35(3):643--667, 2016.

\bibitem{BeiWang2025}
\href{https://doi.org/10.1109/TVCG.2025.3561300}{M.~Li, X.~Yan, L.~Yan,
  T.~Needham, and B.~Wang}.
\newblock \href{https://doi.org/10.1109/TVCG.2025.3561300}{Flexible and
  probabilistic topology tracking with partial optimal transport}.
\newblock \href{https://doi.org/10.1109/TVCG.2025.3561300}{{\em IEEE
  Transactions on Visualization and Computer Graphics}},
  \href{https://doi.org/10.1109/TVCG.2025.3561300}{pp. 1--18},
  \href{https://doi.org/10.1109/TVCG.2025.3561300}{2025}.
  \href{https://doi.org/10.1109/TVCG.2025.3561300}
{doi: {{%
10\hspace{.1pt}\discretionary{.}{%
}{.}\hspace{.4pt}1109\discretionary{/}{%
}{/}TVCG\hspace{.1pt}\discretionary{.}{%
}{.}\hspace{.4pt}2025\hspace{.1pt}\discretionary{.}{%
}{.}\hspace{.4pt}3561300}}}


\bibitem{Garth2020NestedTrackingGraph}
\href{https://doi.org/10.1109/TVCG.2019.2934368}{J.~Lukasczyk, C.~Garth, G.~H.
  Weber, T.~Biedert, R.~Maciejewski, and H.~Leitte}.
\newblock \href{https://doi.org/10.1109/TVCG.2019.2934368}{Dynamic nested
  tracking graphs}.
\newblock \href{https://doi.org/10.1109/TVCG.2019.2934368}{{\em IEEE
  Transactions on Visualization and Computer Graphics}},
  \href{https://doi.org/10.1109/TVCG.2019.2934368}{26(1):249--258},
  \href{https://doi.org/10.1109/TVCG.2019.2934368}{2020}.
  \href{https://doi.org/10.1109/TVCG.2019.2934368}
{doi: {{%
10\hspace{.1pt}\discretionary{.}{%
}{.}\hspace{.4pt}1109\discretionary{/}{%
}{/}TVCG\hspace{.1pt}\discretionary{.}{%
}{.}\hspace{.4pt}2019\hspace{.1pt}\discretionary{.}{%
}{.}\hspace{.4pt}2934368}}}


\bibitem{milnor1963morse}
\href{https://doi.org/10.1515/9781400881802}{J.~Milnor}.
\newblock \href{https://doi.org/10.1515/9781400881802}{{\em Morse Theory}},
  \href{https://doi.org/10.1515/9781400881802}{vol.~51 of {\em Annals of
  Mathematics Studies}}.
\newblock \href{https://doi.org/10.1515/9781400881802}{Princeton University
  Press}, \href{https://doi.org/10.1515/9781400881802}{1963}.
  \href{https://doi.org/10.1515/9781400881802}
{doi: {{%
10\hspace{.1pt}\discretionary{.}{%
}{.}\hspace{.4pt}1515\discretionary{/}{%
}{/}9781400881802}}}


\bibitem{extremumgraphs-narayanan}
\href{https://doi.org/10.1109/PACIFICVIS.2015.7156386}{V.~Narayanan, D.~M.
  Thomas, and V.~Natarajan}.
\newblock \href{https://doi.org/10.1109/PACIFICVIS.2015.7156386}{Distance
  between extremum graphs}.
\newblock \href{https://doi.org/10.1109/PACIFICVIS.2015.7156386}{In {\em 2015
  IEEE Pacific Visualization Symposium (PacificVis)}},
  \href{https://doi.org/10.1109/PACIFICVIS.2015.7156386}{pp. 263--270},
  \href{https://doi.org/10.1109/PACIFICVIS.2015.7156386}{2015}.
  \href{https://doi.org/10.1109/PACIFICVIS.2015.7156386}
{doi: {{%
10\hspace{.1pt}\discretionary{.}{%
}{.}\hspace{.4pt}1109\discretionary{/}{%
}{/}PACIFICVIS\hspace{.1pt}\discretionary{.}{%
}{.}\hspace{.4pt}2015\hspace{.1pt}\discretionary{.}{%
}{.}\hspace{.4pt}7156386}}}


\bibitem{oestering2017computing}
\href{https://doi.org/10.1007/978-3-319-44684-4_5}{P.~Oesterling, C.~Heine,
  G.~Weber, D.~Morozov, and G.~Scheuermann}.
\newblock \href{https://doi.org/10.1007/978-3-319-44684-4_5}{Computing and
  visualizing time-varying merge trees for high-dimensional data}.
\newblock \href{https://doi.org/10.1007/978-3-319-44684-4_5}{In {\em
  Topological Methods in Data Analysis and Visualization IV: Theory,
  Algorithms, and Applications}},
  \href{https://doi.org/10.1007/978-3-319-44684-4_5}{Mathematics and
  Visualization}, \href{https://doi.org/10.1007/978-3-319-44684-4_5}{pp.
  87--101}. \href{https://doi.org/10.1007/978-3-319-44684-4_5}{Springer},
  \href{https://doi.org/10.1007/978-3-319-44684-4_5}{2017}.
  \href{https://doi.org/10.1007/978-3-319-44684-4_5}
{doi: {{%
10\hspace{.1pt}\discretionary{.}{%
}{.}\hspace{.4pt}1007\discretionary{/}{%
}{/}978\discretionary{%
}{-}{-}3\discretionary{%
}{-}{-}319\discretionary{%
}{-}{-}44684\discretionary{%
}{-}{-}4\_5}}}


\bibitem{Tierny2021}
\href{https://doi.org/10.1109/TVCG.2021.3114839}{M.~Pont, J.~Vidal, J.~Delon,
  and J.~Tierny}.
\newblock \href{https://doi.org/10.1109/TVCG.2021.3114839}{Wasserstein
  distances, geodesies and barycenters of merge trees}.
\newblock \href{https://doi.org/10.1109/TVCG.2021.3114839}{{\em IEEE
  Transactions on Visualization and Computer Graphics}},
  \href{https://doi.org/10.1109/TVCG.2021.3114839}{PP:1--1},
  \href{https://doi.org/10.1109/TVCG.2021.3114839}{10 2021}.
  \href{https://doi.org/10.1109/TVCG.2021.3114839}
{doi: {{%
10\hspace{.1pt}\discretionary{.}{%
}{.}\hspace{.4pt}1109\discretionary{/}{%
}{/}TVCG\hspace{.1pt}\discretionary{.}{%
}{.}\hspace{.4pt}2021\hspace{.1pt}\discretionary{.}{%
}{.}\hspace{.4pt}3114839}}}


\bibitem{viscousfingers2016dataset}
{IEEE Vis}. scientific visualization contest.
\newblock \url{http://www.uni-kl.de/sciviscontest/}, 2016.

\bibitem{Soler2018}
\href{http://arxiv.org/abs/1808.05870}{M.~Soler, M.~Plainchault, B.~Conche, and
  J.~Tierny}.
\newblock \href{http://arxiv.org/abs/1808.05870}{{Lifted Wasserstein Matcher
  for Fast and Robust Topology Tracking}}.
\newblock \href{http://arxiv.org/abs/1808.05870}{In {\em IEEE LDAV}},
  \href{http://arxiv.org/abs/1808.05870}{2018}.

\bibitem{sridharamurthy2020}
R.~Sridharamurthy, T.~B. Masood, A.~Kamakshidasan, and V.~Natarajan.
\newblock Edit distance between merge trees.
\newblock {\em {IEEE} Transactions on Visualization and Computer Graphics},
  26(3):1518--1531, 2020.

\bibitem{weinkauf2010}
T.~Weinkauf and H.~Theisel.
\newblock Streak lines as tangent curves of a derived vector field.
\newblock {\em IEEE Trans. Vis. Comp. Graphics}, 16(6):1225--1234, 2010.

\bibitem{Yan2023}
\href{https://doi.org/10.1109/TVCG.2022.3163349}{L.~Yan, T.~B. Masood,
  F.~Rasheed, I.~Hotz, and B.~Wang}.
\newblock \href{https://doi.org/10.1109/TVCG.2022.3163349}{Geometry-aware merge
  tree comparisons for time-varying data with interleaving distances}.
\newblock \href{https://doi.org/10.1109/TVCG.2022.3163349}{{\em IEEE
  Transactions on Visualization and Computer Graphics}},
  \href{https://doi.org/10.1109/TVCG.2022.3163349}{29(8):3489--3506},
  \href{https://doi.org/10.1109/TVCG.2022.3163349}{2023}.
  \href{https://doi.org/10.1109/TVCG.2022.3163349}
{doi: {{%
10\hspace{.1pt}\discretionary{.}{%
}{.}\hspace{.4pt}1109\discretionary{/}{%
}{/}TVCG\hspace{.1pt}\discretionary{.}{%
}{.}\hspace{.4pt}2022\hspace{.1pt}\discretionary{.}{%
}{.}\hspace{.4pt}3163349}}}


\bibitem{yan2021scalar}
L.~Yan, T.~B. Masood, R.~Sridharamurthy, F.~Rasheed, V.~Natarajan, I.~Hotz, and
  B.~Wang.
\newblock Scalar field comparison with topological descriptors: Properties and
  applications for scientific visualization.
\newblock {\em Computer Graphics Forum}, 40(3):599--633, 2021.

\end{thebibliography}

\end{document}